\newtheorem{proposition}{Proposition}
\begin{document}

\title{ALOHA Random Access that Operates as a Rateless Code}

\author{\v Cedomir~Stefanovi\' c,~\IEEEmembership{Member,~IEEE,} Petar~Popovski,~\IEEEmembership{Senior Member,~IEEE}

\thanks{\v Cedomir Stefanovi\' c and Petar Popovski are with the Department of Electronic Systems, Aalborg University,
Aalborg, Denmark (email: \{cs,petarp\}@es.aau.dk).}
\thanks{The research presented in this paper was partly supported by the Danish Council for Independent Research (Det Frie Forskningsr{\aa}d) within the Sapere Aude Research Leader program, Grant No. 11-105159 ``Dependable Wireless Bits for Machine-to-Machine (M2M) Communications'' and performed partly in the framework of the FP7 project ICT-317669 METIS, which is partly funded by the European Union. The authors would like to acknowledge the contributions of their colleagues in METIS, although the views expressed are those of the authors and do not necessarily represent the project.}
}

\maketitle

\begin{abstract}
Various applications of wireless Machine-to-Machine (M2M) communications have rekindled the research interest in random access protocols, suitable to support a large number of connected devices.
Slotted ALOHA and its derivatives represent a simple solution for distributed random access in wireless networks.
Recently, a framed version of slotted ALOHA gained renewed interest due to the incorporation of  successive interference cancellation (SIC) in the scheme, which resulted in substantially higher throughputs.
Based on similar principles and inspired by the rateless coding paradigm, a frameless approach for distributed random access in slotted ALOHA framework is described in this paper. 
The proposed approach shares an operational analogy with rateless coding, expressed both through the user access strategy and the adaptive length of the contention period, with the objective to end the contention when the instantaneous throughput is maximized.
The paper presents the related analysis, providing heuristic criteria for terminating the contention period and showing that very high throughputs can be achieved, even for a low number for contending users.
The demonstrated results potentially have more direct practical implications compared to the approaches for coded random access that lead to high throughputs only asymptotically. 
\end{abstract}

\begin{IEEEkeywords}

random access protocols, slotted ALOHA, distributed rateless coding, successive interference cancellation, M2M communications

\end{IEEEkeywords}

\section{Introduction}
\label{sec:intro}

Slotted ALOHA \cite{R1975} is a well known distributed random access scheme in which the link time is divided into slots of equal duration and the users contend to access the Base Station (BS) by transmitting with a predefined slot-access probability $p_a$.
Framed ALOHA \cite{OIN1977} is a variant in which the link time is divided into frames containing $M$ slots, and the users contend by transmitting in a single, randomly chosen slot of the frame.
Both in slotted and framed ALOHA, only the slots containing a single user transmission (i.e., singleton slots) are useful and the corresponding transmission is successfully resolved, while the slots containing no user transmission (i.e., idle slots) or multiple user transmissions (i.e., collision slots) are wasted.
The throughput $T$, defined as the probability of successfully receiving a user transmission per slot, is equal to the probability of having a singleton slot.
In case of slotted ALOHA, the throughput is maximized when the slot-access probability is set to $p_a = 1/N$, where $N$ is the number of contending users.
Similarly, the throughput of framed ALOHA is maximized when $M=N$.
In both cases, the maximal throughput is $T_{\max} = 1/e \approx 0.37$, achieved for $N \rightarrow \infty$.

A major upgrade of the framed ALOHA was proposed in \cite{CGH2007}, where each user sends replicas of the packet in multiple slots of the frame, thus lowering the fraction of idle slots and increasing the fraction of collision slots. However, instead of only using the singleton slots, the receiver also uses the collision slots by applying successive interference cancellation (SIC).\footnote{The idea of users transmitting multiple replicas of the same data in the frame, but without SIC, was originally proposed in \cite{CR1983}.}
As a result, the throughput is substantially increased, as it now accounts both for the successfully received and subsequently resolved transmissions.
For the simplest case in which each user transmits two replicas of the same packet, the maximal throughput is increased to $T_{\max}\approx0.55$.

Another major improvement of framed ALOHA was made in \cite{L2011}, which related 
the process of successive interference cancellation applied to colliding users to the process of iterative belief-propagation (BP) erasure-decoding of codes-on-graphs. This opened the way to use the rich tools of codes-on-graphs in designing random access strategies. 
The optimal access strategy is analogous to coding of left-irregular LDPC codes: based on a predefined probability distribution, each user selects randomly and independently a number of slots from the frame in which he will repeat the replicas of the same message. 
In this way, the achievable throughput can further be increased and a suitable selection of the distribution can lead to $T_{\max} \rightarrow 1$ as $N \rightarrow \infty$ \cite{L2011,NP2012}.

Exploiting the analogies between SIC and iterative BP erasure-decoding, in this paper we elaborate the approach of \emph{frameless ALOHA} for distributed random access.
Our motivation stems from another type of codes-on-graphs with advantageous properties, namely the \emph{rateless} codes \cite{BLMR1998}.
Frameless ALOHA employes a full operational analogy with rateless codes, seen in two aspects.
First, the packet of each user acts as an information symbol, while each received slot at the BS acts as an encoded symbol. Second, in rateless coding the duration of the transmission period is not predefined, but it ends when the receiver decodes the data successfully and sends a terminating feedback to the transmitter. 
Likewise, in frameless ALOHA the the frame length is not \emph{a priori fixed} (thereby the term ``frameless''), but the BS terminates the contention process adaptively, when a termination criterion implemented at the receiver is satisfied. The results show that the adaptive termination is central to the throughput maximization.
We also note that the proposed approach, in which the complexity is transferred to the receiver at the Base Station, is highly desirable for M2M communication scenarios. 

\begin{figure*}[t]
	\begin{center}
\includegraphics[width=1.6\columnwidth]{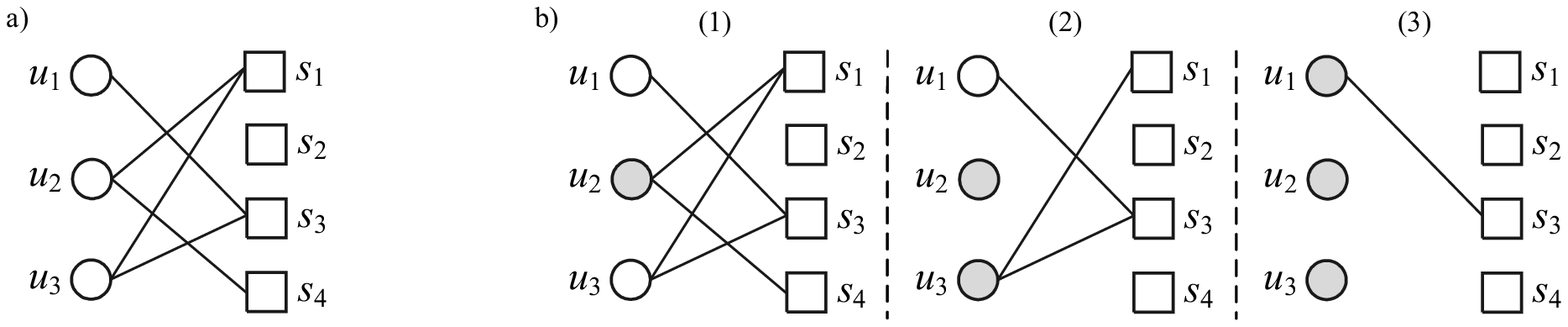}
	\end{center}
\caption{a) Graph representation of framed ALOHA b) Execution of SIC on graph.}
	\label{fig:graph}
\end{figure*}

The initial ideas and analysis of the frameless random access were presented in \cite{SPV2012}, where the performance of the scheme was investigated under varying ratios $\frac{M}{N}$.
In this paper we expand the concepts and analysis in multiple ways.
First, in contrast to \cite{SPV2012}, we investigate the performance of the scheme using the termination criterion related to the monitoring of the fraction of resolved user transmissions and instantaneous throughput.
We show that a simple heuristic approach based on the asymptotic analysis, coupled with a rather simple variant of the access strategy that is uniform both over users and over slots, grants throughputs that are the highest reported in the literature for practical number of contending users (i.e., $N \in [50,1000]$), significantly surpassing the results presented in \cite{SPV2012}.
Further, we consider the impact of the noise-induced packet erasures on the performance of the proposed scheme and, as the parameters of the scheme depend on $N$, we  analyze the impact of the accuracy of its estimate as well.
We also provide insights into the issues related to practical implementation and elaborate on the similarities/differences with the standard rateless and raptor codes.
Finally, we note that the goal of the presented analysis is the maximization of the one-shot throughput of the scheme; in the related work \cite{STPP2013}, we analyzed the overall throughput of the scheme in the case of a batch arrival of $N$ users that are contending over multiple rounds.
However, the emphasis in \cite{STPP2013} was on the design of a suitable algorithm for the estimation of $N$, and not on the optimization of the termination criterion.

The organization of the rest of the paper is as follows.
Section~\ref{sec:background} presents the background and related work.
Section~\ref{sec:model} introduces the system model and presents the related analysis.
Section~\ref{sec:threshold} elaborates the principles for terminating contention period in the proposed scheme.
The sensitivity of the scheme's performance to the accuracy of the estimated number of users is explored in Section~\ref{sec:sensitivity}.
Section~\ref{sec:discussion} considers practical aspects of the scheme, as well as relations to the standard rateless codes.
Finally, Section~\ref{sec:conclusion} concludes the paper.

\section{Background and Related Work}
\label{sec:background}

\subsection{Coded Random Access and SIC}

We start by presenting the most relevant concepts from \cite{L2011}, as it represents the closest work in the state-of-the-art. 
Fig.~\ref{fig:graph}a) depicts a toy example for the graph representation of framed ALOHA with 3 users and 4 slots; the left-side nodes represent contending users, the right-side nodes represent the slots of the frame and the edges connect the users with the respective slots in which the user transmissions take place.
All transmissions performed by a user carry the same message and a pointer to all other replicas.

The execution of SIC for the same example is depicted in Fig.~\ref{fig:graph}b).
Same as in an iterative BP erasure-decoding, in the first step singleton slots are identified, i.e., slot $s_4$ and the corresponding transmissions, i.e., the transmission of the user $u_2$, resolved.
Using pointers contained in the resolved transmissions, their replicas, i.e., the respective edges, are removed from the corresponding slots, thus potentially resulting in new singleton slots.
In the above example, the replica transmitted by $u_2$ is removed from slot $s_1$, and as a result, this slot becomes a singleton. 
SIC iterates in the same way, until there are no new singleton slots or all user transmissions have been resolved.

The presented analogy between SIC and iterative BP erasure-decoding motivated the application of the theory and tools from codes-on-graphs to improve the throughput of framed ALOHA \cite{L2011}.
Specifically, throughput maximization for framed ALOHA with SIC can be represented by minimization of the symbol-error probability for the corresponding fixed-rate erasure-correcting code, when decoded by the iterative BP algorithm. 
The main difference with the standard fixed-rate code design is expressed through the fact that, due to the constraints of ALOHA framework, the encoding is done in a distributed and uncoordinated way.
This implies that the distribution of colliding transmissions over the slots, i.e., the distribution of the edges over the right-side nodes in Fig.~\ref{fig:graph}, can not be controlled directly, but rather only statistically, through the behavior of the contending users. 
In \cite{L2011} it was shown using numerical optimization that the strategy that maximizes the throughput is analogous to the encoding of left-irregular LDPC codes: following an irregular distribution, every user randomly and independently selects a number of slots from the frame in which its replicas are going to be transmitted. 
This initial work was followed by papers that further extended the scheme by applying doubly generalized LDPC codes, where instead of transmitting replicas users transmit encoded segments of the message \cite{PLC2011,PFC2010}, derived the capacity bounds \cite{PLC2011b}, and finally, applied spatially coupled codes to the framework \cite{LPLC2012}. 

The main difference of our approach with respect to the works outlined above is that its operation is based on the concepts of rateless codes, described in the next subsection.

\subsection{Rateless Codes}

Rateless (or digital fountain) codes \cite{BLMR1998} represent a class of forward error-correction codes with capacity-approaching behavior over erasure channels that is universal and valid for an arbitrary erasure statistics.
They are \emph{rateless} due to their key operational feature, described as follows.
The code rate is not set a priori  and the encoder transmits new encoded symbols\footnote{Here symbols represent equal-length packets of bits, as in a typical erasure-coding context.} continuously (the rate decreases with every transmitted symbol).
When the receiver decodes the message it sends feedback and the transmitter stops sending symbols, which effectively (and a posteriori) establishes the code rate.

The first practical capacity-approaching version of rateless codes are the LT codes~\cite{L2002}.
LT encoding is a simple process where, for each encoded symbol, a degree $d$ is sampled from a degree distribution $\Psi(d)$, then $d$ out of $K$ information symbols from the source message are uniformly selected and bit-wise XOR-ed to produce the encoded symbol.
The fundamental element of the LT code is the design of the degree distribution $\Psi(d)$, which should be such to enable the recovery of the source message using iterative BP decoding from any set of $(1+\epsilon)K$ successfully received encoded symbols, where $\epsilon$ is a small positive value.
In \cite{L2002} it was shown that the desired operation can be achieved by selecting $\Psi(d)$ to be a robust-soliton degree-distribution. 

LT codes admit sparse-graph interpretation, similar to the one presented in Fig.~\ref{fig:graph}a), where the left/right-side nodes represent the information/encoded symbols and the graph edges reflect the process of combining information symbols into encoded symbols.
As already noted, the right degree distribution is of a robust-soliton type, while the left degree distribution asymptotically tends to the Poisson distribution due to random uniform source node sampling.
In contrast, in a random access protocol each user, acting as an information symbol, chooses its transmission instants independently of the other users.
This constrains the design space to the decentralized design of left degree distributions.
At the same time, due to the decentralized and uncoordinated access strategy, the right degree distribution (which is associated with slots) tends to a Poisson distribution, as demonstrated in Section~\ref{sec:distr}.

Another major difference between the design of the optimal strategies in LT erasure-coding and interference cancellation scenarios, stems from the fact that in the former the average degree of encoded symbols scales as $O(\log K)$, i.e., it increases logarithmically with the number of information symbols.
The same strategy is not suitable for coded random access, as it would imply that slot degrees would increase with the number of contending users, which may adversely affect the interference cancellation potential \cite{FJPSL2011,ZZ2012}.
In other words, when there are more colliding transmissions in a slot, it becomes less likely that the individual transmissions can be extracted via SIC.

A recent theoretical contribution \cite{NP2012}, shows that asymptotically, the use of truncated ideal soliton distribution at the user side achieves optimal performance in a \emph{framed} ALOHA setting, i.e., $T \rightarrow 1$ when $N \rightarrow \infty$.
Contrary to \cite{NP2012}, our approach builds-up genuinely on the rateless coding principles \cite{BLMR1998}, as it operates with no prior notion of a frame and implements an adaptive termination of the contention period.
We also note that the average slot degree of the distribution proposed in \cite{NP2012} essentially scales logarithmically with the number of users\footnote{A related remark was pointed out in \cite{PLC2012}, by noting that the number of repeated user transmissions in the frame increases as $N$ increases.} $N$, and, as already discussed, the potential application of these results could be limited by the SIC performance.

\section{System Model}
\label{sec:model}

We assume that there are $D$ users in the system, out of which a random subset of $N$ \emph{active} users is contending to access to the BS.
The start and the end of the contention period is denoted through a beacon sent by the BS.
The contention period is divided into slots of equal duration, and the users are synchronized on a slot basis.
The length of the contention period is $M$ slots; $M$ is not determined a priori and gets assigned its value dynamically, when the contention period is terminated.
Within the contention, each active user accesses the medium and transmits packets on a slot basis by using identical slot-access probability $p_a$.
This value is broadcast by the BS via beacon at the start of the contention period and is determined based on the estimation of $N$ available at BS.
We will assume at first that BS knows perfectly $N$ and discuss the consequences of violating that assumption in Section~\ref{sec:sensitivity}.
The contention period ends when the BS sends a new beacon. 

After the BS transmits the beacon, the active set of users starts a contention. Different from \cite{STPP2013}, where we have considered batch arrivals, here we assume that  each of the $D$ devices is backlogged and has always a packet to send.
However, the set of active devices that transmit in a given contention period is random, due to the following.
After the reception of the beacon, the user $u_i$ estimates its channel coefficient $h_i$ and becomes active only if $|h_i|>\tau$, where $\tau$ is a predefined threshold.
The user channel coefficients are constant in a given transmission period, but may change from one to another contention period due to fading.
Here we do not deal with the fading statistics, but we remark that it can be used to estimate the number of active users $N$.

If $u_i$ is active in a given contention period, then $u_i$ transmits:
\begin{align}
\label{eq:precoding}
X'_i = \frac{h_i^*}{|h_i|^2} X_i, \; 1 \leq i \leq N,
\end{align} 
Strictly speaking, $X_i$ represents a transmitted symbol with unit power $E[|X_i|^2]=1$, but we slightly abuse the notation and consider that $X_i$ stands for a packet that $u_i$ transmits during a given contention period.
The baseband precoding coefficient $ \frac{h_i^*}{|h_i|^2}$ implements a channel inversion.
This justifies our criterion to activate a user if his channel coefficient is above the threshold, as it effectively sets an upper limit on the transmitting power.
We assume that $h_i$ does not change during a contention period and note that several replicas of the same packet $X'_i$ may be sent in different slots belonging to the same contention period.

The BS station receives composite signal $Y_j$ in slot $s_j$:
\begin{align}
\label{eq:composite}
Y_j = \sum_{i=1}^{N} a(i,j) X_i + Z_j, \;  1 \leq j \leq M,
\end{align}
where $Z_j$ is the noise and $a(i,j)=1$ if user $i$ transmits in slot $j$, is 0 otherwise, and:
\begin{align}
P\left[a(i,j)=1 \right] = p_a, \; 1 \leq i \leq N, \; 1 \leq j \leq M.
\end{align}
Due to channel inversion, the received contribution from $u_i$ at the BS in a given slot is $X_i$. At first we will introduce the frameless access mechanism by assuming that $Z_j=0$ for all $j$ and revise this assumption in Section~\ref{sec:noise}.

Each transmission of user $u_i$, $1 \leq i \leq N$, contains a replica of the same message $X_i$; we assume that each replica contains pointers to all other replicas, as this is required for the execution of SIC.\footnote{We address the practical way of sending pointers in Section \ref{sec:practical}.}
The number of replicas transmitted by $u_i$ is denoted as user degree $|u_i|$ and the number of colliding transmissions in slot $s_j$ is denoted as slot degree $|s_j|$:
\begin{align}
|u_i|=\sum_{j=1}^{M} a(i,j), \; 1 \leq i \leq N, \\
|s_j|=\sum_{i=1}^{N} a(i,j), \; 1 \leq j \leq M.
\end{align}

The BS station is able to discern among idle ($|s_j|=0$), singleton ($|s_j|=1$) and collision slots ($|s_j| > 1$), stores all the observed slots (i.e., the received composite signals), and after each observed slot, the BS performs SIC until it finishes naturally, i.e., until there are no more users that could be resolved by the stored and currently observed slot.
This is repeated until the BS decides to terminate the contention period and send a new beacon. The key element is the choice of the termination criteria, as elaborated in the next section.

\begin{figure}[tbp]
	\begin{center}
\includegraphics[width=0.81\columnwidth]{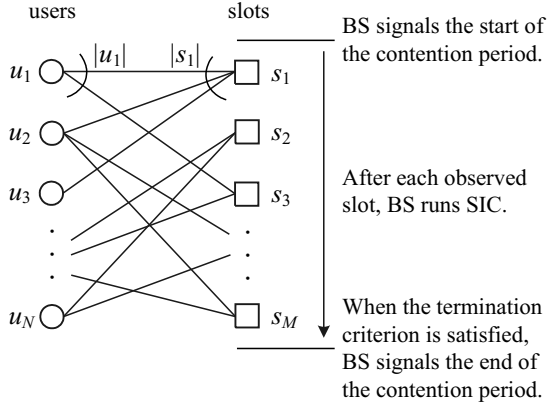}
	\end{center}
\caption{Graph representation of frameless ALOHA.}
	\label{fig:f-graph}
\end{figure}

The described framework is depicted in Fig.~\ref{fig:f-graph}.
The number of edges incident to user node $u_i$ is equal to $|u_i|$ and the number of edges incident to slot node $s_j$ is equal to $|s_j|$.
Henceforth, we use terms user/user node and slot/slot node, interchangeably.
Also, the term ``user resolution' indicates that the BS has recovered the message transmitted by the user.

\subsection{Degree Distributions}
\label{sec:distr}

We assume that all the slots in the proposed scheme have the same target degree\footnote{In a general case, the target slot degree $G$ can depend on slot number $s_j$, i.e., $G=f(s_j)$ In the initial work \cite{SPV2012}, it was shown that, despite using such a simplistic access strategy with a constant $G$, high throughputs can be achieved, comparable or even higher than the ones attainable by more involved access methods \cite{L2011}.}\textsuperscript{,}\footnote{Note that $G$ in this paper refers to number of colliding packets per slot, i.e., the average \emph{physical} slot load, whereas in \cite{L2011, PLC2011} $G$ refers to the number of the original user packets (excluding replicas) divided by the number of the slots in the frame, i.e., the average \emph{logical} slot load.} $G$.
In order to achieve $G$, the slot-access probability is set to:
\begin{align}
\label{eq:access}
p_a = \frac{G}{N},
\end{align}
It is straightforward to show that the probability that a slot $s$ is a of degree $n$ is:
\begin{align}
\label{eq:binom}
P \left[ |s| = n \right] & = \Psi_n = { N \choose n } p_a^n (1-p_a)^{N-n} \approx \frac{G^n}{n!} e^{-G},
\end{align}
using the standard approximation of the binomial distribution by the Poisson distribution (which can be assumed for the range of values for $N$ and $G$ that are of interest in this paper), and that the average slot degree is equal to the target degree $G$, i.e., $E\left[ |s| = G \right]$.
Following the standard notation used for codes-on-graphs \cite{RU2007}, the slot degree-distribution can be reduced to a particularly simple expression: 
\begin{align}
\label{eq:slot}
\Psi(x) =  \sum_{n=0}^{\infty} \Psi_n x^n = \sum_{n=0}^{\infty} \frac{G^n}{n!} e^{-G} x^n = e^{-G(1-x)}.
\end{align}

The probability of user $u$ having degree $m$ and the corresponding degree-distribution are:
\begin{align}
P \left[ |u| = m \right] & = \Lambda_m = { M \choose m } p_a^m ( 1 - p_a)^{M-m} \nonumber \\
\label{eq:puser}
& \approx \frac{\big( \left(1+\epsilon \right)G \big) ^m}{m!} e^{-( 1 + \epsilon ) G}, \\
\label{eq:user}
 \Lambda (x) & = \sum_{m=0}^{\infty} \Lambda_m x^m =  e^{-(1 + \epsilon)G(1 - x)}.
\end{align}
where $\epsilon = M/N - 1$.  

\section{Terminating the Contention Period in Frameless ALOHA}
\label{sec:threshold}

The central feature of frameless ALOHA is that the contention period is not fixed a priori, but contention is terminated adaptively, aiming to maximize the achieved throughput.
Consequently, the aim is to optimize both the target degree $G$ and the parameters related to the termination of the contention period.
In this section we elaborate the guiding principles for terminating the contention period.
We develop analysis for the case when the noise can be neglected, see \eqref{eq:composite}, and asses the impact of noise in Section~\ref{sec:noise}.

In a typical rateless coding scenario, the criterion for terminating the transmission of encoded symbols is when the complete message has been decoded on the receiving end.
However, due to the constraints of the proposed framework, an identical criterion, where the contention period terminates upon resolving all users, would lead to inefficient use of system resources (i.e., slots) and low throughput.
Particularly, from \eqref{eq:puser} stems that the probability of a user being of degree 0 (i.e., not transmitting at all) is:
\begin{align}
\label{eq:zero}
P \left[ |u| = 0 \right] = \Lambda_0 = e^{-( 1 + \epsilon ) G} = e^{-\frac{M}{N} G}
\end{align}
and it exponentially decays with $M$, with the decay constant $\frac{G}{N}$.
As shown in \cite{SPV2012}, and also due to the limitations of SIC \cite{FJPSL2011,ZZ2012}, the interesting values of $G$ are rather low, implying that the probability of user not transmitting and, therefore, not even having a chance to be resolved, decreases rather slowly with $M$.
In other words, waiting for all users to become resolved would lead to prohibitively long contention periods. 

On the other hand, our aim is to maximize the slot utilization and the expected throughput $T$.
The key observation is that in coded random access it is not vital to resolve all the users in a single contention period, as the unresolved users can always be directed towards a newly initiated contention.
Therefore, the contention period should ideally be terminated when the instantaneous throughput $T_I$ reaches its maximal value, postponing the unresolved users to a future contention period.
If the contention is terminated at the $M$-th slot and the number of resolved users in the same time is $N_R$, then $T_I$ can be computed as:
\begin{align}
\label{eq:T-I}
T_I = \frac{N_R}{M}.
\end{align}

\subsection{Asymptotic Analysis}

\begin{figure}[tbp]
	\begin{center}
\includegraphics[width=0.85\columnwidth]{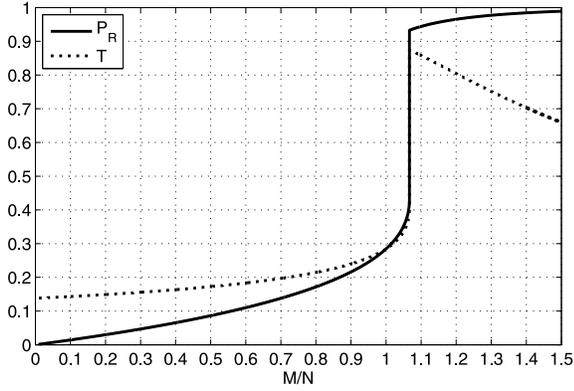}
	\end{center}
\caption{Asymptotic performance of the proposed scheme for $G = 3.12$, obtained using and-or tree evaluation based on \eqref{eq:slot} and \eqref{eq:user}.}
	\label{fig:and-or}
\end{figure}

The asymptotic behavior, when $N \rightarrow \infty$, of the probability of user resolution $P_R$ and the expected throughput $T$:
\begin{align}
T = \frac {P_R}{M/N}= \frac{P_R}{1 + \epsilon},
\end{align}
as functions of the ratio of the number of elapsed slots and number of contending users $M/N$, is shown in Fig.~\ref{fig:and-or}. 
The plotted results are obtained by and-or tree evaluation\cite{LMS1998} using edge-oriented degree distributions derived from \eqref{eq:slot} and \eqref{eq:user}; for details, we refer the interested reader to our previous work \cite{SPV2012}, also noting that the asymptotically optimal value of the target degree $G$ is taken from it.
Obviously, due to the well known avalanche effect, characteristic for the iterative BP erasure-decoding \cite{M2005} (i.e., SIC in our setting), there is a vertical increase in $P_R$ and $T$ at $M/N \approx 1.07$.
At this point, $T$ reaches its maximum, $T \approx 0.874$, and a corresponding termination criterion would be to detect the throughput maximum and end the contention period.

On the other hand, at the point where the avalanche occurs, SIC drives the fraction of resolved users to surge from $P_R = 0.43$ to $P_R = 0.93$ in a single execution cycle.
If one opts to terminate the contention if $P_R \geq V$ and checks this condition after every SIC execution cycle, then setting $V = 0.43 + \delta$, where $\delta \in (0, 0.5)$, should result in throughput that is asymptotically optimal.
The reason is that any value of $\delta$ within the specified range enables the capture of the immediate rise both in $P_R$ and $T$, when $T$ hits its maximal value, as depicted in Fig.~\ref{fig:and-or}.

However, when evaluating the performance of iterative BP decoding, the asymptotic results are not readily transferable to non-asymptotic scenarios, and heuristic/simulation based approaches are typically used.
In the next subsection, we further elaborate the behavior of the scheme and investigate optimized stopping criteria for non-asymptotic number of users using simulation tools.

\subsection{Non-Asymptotic Analysis}
\label{sec:NAA}

\begin{figure}[t]
	\begin{center}
\includegraphics[width=0.85\columnwidth]{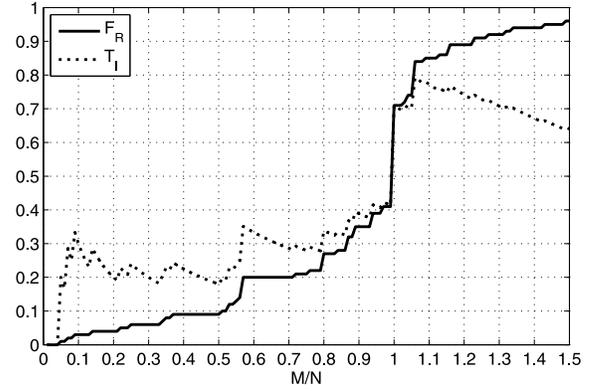}
	\end{center}
\caption{Example of a typical, non-asymptotic performance of the proposed scheme, $N=100$, $G=2.68$.}
	\label{fig:typical}
\end{figure}

Fig.~\ref{fig:typical} depicts a sample evolution of the fraction of resolved users $F_R = \frac{N_R}{N}$ and instantaneous throughput $T_I$ as the number of slots increase, when $N=100$ and $G=2.68$.
As it can be observed, in contrast to the asymptotic case, $T_I$ exhibits several local maxima apart from the global maximum, while the avalanche effect in increase of $T_I$ is not as distinct.
Also, the value of the global maximum in general depends on the particular instance of the SIC evolution.
For example, if the contention period starts with a singleton slot, it is optimal to terminate the contention period immediately after that, i.e., only after a single resolved used, as in this trivial case the system attains $T_I = 1$.
Consequently, the termination criterion that is based solely on $T_I$ and that is able to detect the global maximum as soon as it happens, would be substantially more involved to implement in the non-asymptotic case.
On the other hand, $F_R$ shows a stable performance, in the sense that it monotonically increases with the number of elapsed slots, see the example in Fig.~\ref{fig:typical}; this feature could be exploited when devising a practical criterion for contention termination.

In the next subsection we explore the performance of a heuristic, threshold-based criterion, which consists of \emph{two conditions} - the contention period is terminated either when $T_I \geq S$ or $F_R \geq V$, where $S$ and $V$ are the respective thresholds.
In order to define an upper bound on the performance, we also devise a \emph{genie-aided termination criterion}: the receiver knows non-causally the throughput that will be obtained at all future slots and terminates the contention at the moment in which the throughput is maximal.

\subsection{Results}

All presented results are obtained by averaging 10000 simulation runs for each set of parameters $N$, $G$, $S$ and $V$.
Also, only the features of the proposed access scheme were taken into account in the performed simulations, implementing SIC as a standard iterative BP erasure-decoder \cite{L2002} and neglecting the impact of the physical layer.
However, despite the simplicity of the approach, the presented results can serve as reliable guidelines of the performance achievable in practical scenarios, as justified in \cite{L2011}.

Table~\ref{tab:frameless} compares the genie-aided performance and the performance of the proposed scheme,\footnote{The presented values are rounded to the first two decimals.} for varying number of users $N$.
The throughput values for genie-aided method $\bar{T}_{GA}$ are obtained in the following way.
For each value of $N$, the target degree $G$ is varied with a step $\delta=0.01$ through the candidate range of values.
For each value of $G$, in each simulation run the maximal throughput over $M=10N$ is recorded, and then averaged over runs.
$\bar{T}_{GA}$ is the maximum of the averages and serves as a benchmark.

The maximum average throughput of the scheme with the two condition-based termination criterion (as introduced at the end of the previous subsection) is denoted by $\bar{T}^*$ and obtained in the similar way.
For each $N$ a maximization is performed over parameters $G$, $S$ and $V$, which are varied with a step $\delta=0.01$ through the range of interest.
For every combination of $G$, $S$ and $V$, each simulation run is executed until $F_R \geq V$ or $T_I \geq S$, when $F_R$, $T_I$ and the number of slots $M$ are recorded and then averaged over runs.
$\bar{T}^*$ is the maximum of the throughput averages; the optimal values of the parameters that yield $\bar{T}^*$ are listed as $G^*$, $S^*$ and $V^*$ in Table~\ref{tab:frameless}.
Finally, Table~\ref{tab:frameless} also presents $\bar{F}_{R}$, $\bar{M}/N$, and $\bar{R}$, which are the average fraction of resolved users, the normalized average number of slots in the contention period, and the average number of transmitted replicas per user\footnote{It is straightforward to show that $\bar{R} = \frac{\bar{M}}{N} G$.}, respectively, obtained for $G^*$, $S^*$ and $V^*$.

Comparing $\bar{T}^*$ with $\bar{T}_{GA}$, it could be noted that the proposed termination criterion leads to throughputs that approach the benchmark values.
More importantly, despite the simplicity of the approach, the obtained throughputs for number of users in the presented range of $50-1000$ are exceptionally high, and, to the best of our knowledge, unmatched in the state-of-the-art for the assumed communication model \cite{L2011,SPV2012}.

The optimal target slot degree $G^*$ modestly grows with $N$, starting from $G^* = 2.68$ for $N=50$ and increasing to $G^* = 3.03$ for $N=1000$.
The same behavior was observed in \cite{SPV2012}, where it was shown that $G=3.12$ is asymptotically optimal.
The optimal threshold on the fraction of resolved users $V^*$ also modestly grows with $N$, while the optimal throughput threshold $S^*$ is constant and equal to its maximal value, i.e., $S^* = 1$.
Specifically, the simulation study showed that performance is more sensitive to the choice of $V$ than $S$; once $V$ is properly selected, a simple choice of $S^*=1$ maximizes the expected throughput.

\renewcommand{\arraystretch}{1.3}
\begin{table}[t]
	\centering
		\begin{tabular}{|c||c|c|c|c|}
			\hline
			$N$ & 50 & 100 & 500 & 1000 \\
			\hline
			\hline
 			$\bar{T}_{GA}$ & 0.83 & 0.84 & 0.88 & 0.88 \\			
			\hline
			\hline
      $\bar{T}^*$ & 0.82 & 0.84 & 0.87 & 0.88 \\
			\hline
			$\bar{F}_{R}$ & 0.75 & 0.76 & 0.76 & 0.76 \\
			\hline
			$\bar{M} / N$ & 0.97 & 0.95 & 0.9 & 0.9 \\
			\hline
			$\bar{R}$ & 2.6 & 2.69 & 2.69 & 2.73 \\
			\hline
		 	$G^*$ & 2.68 & 2.83 & 2.99 & 3.03  \\
			\hline
			$S^*$ & 1 & 1 & 1 & 1 \\
			\hline
			$V^*$ & 0.83 & 0.87 & 0.88 & 0.89 \\
			\hline
			\hline
			$\bar{T}$ for $G=2.9$, $V^*$ and $S^*$ & 0.81 & 0.84 & 0.87 & 0.87 \\
			\hline
			$\bar{T}$ for $G=2.9$ $V=0.8$ and $S^*$& 0.81 & 0.83 & 0.86 & 0.87 \\
			\hline
		\end{tabular}
	\caption{Performance results of the proposed scheme.}
	\label{tab:frameless}
\end{table}

The results corresponding to the expected fraction of resolved users $\bar{F}_R$ show that the proposed scheme on average resolves roughly 75\% of users; also, it is always $\bar{F}_{R} \leq V^*$, which may seem counterintuitive.
This is due to the fact that for the cases when the contention is terminated when the instantaneous throughput $T_I$ hits $S^*=1$, the termination typically happens after a single slot (which happened to be a singleton slot), when $F_R = \frac{1}{N}$.\footnote{We emphasize that the proposed scheme aims to maximize the use of the system resources, i.e., to maximize the throughput, and not to the resolve all users in a single contention period.
The unresolved users simply continue to contend in the subsequent contention periods, which is a feature typical for any ALOHA-based scheme with backlogged users.
The implementation of the ``complete'' contention procedure in the frameless framework, executed until all users from a batch arrival have been resolved and in which the BS uses the information from all the slots from all contention periods for SIC is investigated in \cite{STPP2013}.}
The same fact can be observed by inspecting the values of $\bar{M}$ - the average duration of the contention period in slots is always lower than $N$, due to the instances when the contention is terminated because $T_I$ reached 1.  
Finally, the average number of transmissions per user $\bar{R}$ slightly increases with $N$, but always remains lower than the average number of transmissions per slot $G^*$, due to the fact that $\frac{\bar{M}}{N} < 1$.

At this point we turn back to the fact that the obtained range of the optimal values of $G$ hint that a choice of constant value that does not depend on $N$, could yield a satisfactory performance.
This is verified in the penultimate column of Table~\ref{tab:frameless}, which indicates that $G$ could be set to an optimized constant $G=2.9$ with no substantial performance loss.
As a further insight into the scheme's robustness to the choice of parameters, the last column of Table~\ref{tab:frameless} presents the throughput results when both $G$ and $V$ are constant and set to 2.9 and 0.8, respectively.
Obviously, the price to pay for the selection of suboptimal $G$ and $V$ is modest, but now neither of the parameters depends on $N$.
Note that further results on the sensitivity of the scheme's performance to selection of $G$ and $V$, when the knowledge of $N$ is imperfect, are presented in Section~\ref{sec:sensitivity}.

\subsection{Impact of Noise-induced Packet Erasures}
\label{sec:noise}

The operational analogy with rateless coding grants another advantage over related framed ALOHA-based schemes \cite{L2011,NP2012,PLC2011}.
Namely, the slots that become garbled due to noise can be simply neglected, in the same way as erroneously received symbols are discarded in the rateless coding scenario.
In order to see this, assume that the probability that a packet is erased whenever a user is transmitting in a singleton slot is $P_e$; since the baseband precoding is used, each user has the same received SNR and therefore $P_e$ is equal for all users.
We state the following:

\begin{proposition}
\label{proposition}
If the probability of noise-induced erasure in a singleton slot is $P_e$, the expected throughput reduces to:
\begin{align}
\label{eq:noisy}
T' = T (1 - P_e),
\end{align}
where $T$ is the noiseless expected throughput. 
\end{proposition}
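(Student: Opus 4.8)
The plan is to trace how a noise-induced erasure in a singleton slot affects the SIC process and argue that its only effect is to remove that slot from consideration, so that the decoding graph behaves exactly as if the slot had never been received. First I would recall from the system model that the BS stores every observed slot and performs SIC to completion after each new slot. A slot is useful to SIC only when it is (or becomes, after cancellations) a singleton, since only a singleton reveals one user's packet. A slot that is idle or a collision carries no decodable information on its own; it only ever contributes once it has been reduced to a singleton via cancellation of all but one of its colliding contributions. Hence the relevant question is what happens to a slot at the instant it would act as a singleton.

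The key step is the observation that noise garbles a transmission precisely in the singleton regime: by hypothesis, whenever a user is the lone (remaining) contributor to a slot, its packet is successfully extracted with probability $1-P_e$ and is irrecoverably lost with probability $P_e$, independently across such events (equal received SNR via the baseband precoding in \eqref{eq:precoding}, so $P_e$ is common to all users). When such an erasure occurs, the BS cannot resolve that user from that slot, and since the slot is now (effectively) a degree-one slot with its unique packet unreadable, it can never contribute anything further to SIC either. Therefore an erased singleton event is operationally identical to deleting that slot from the set of received slots: the residual graph on which SIC continues is the same in both cases. Consequently, running the noiseless scheme and then independently deleting each would-be-resolving event with probability $P_e$ yields exactly the distribution of the noisy scheme's decoding outcome.

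From here the throughput identity follows by linearity of expectation. Index the users $1,\dots,N$ and let $\mathbf{1}_i$ be the indicator that user $i$ is resolved in the noiseless run; then $N_R=\sum_i \mathbf{1}_i$ and, for a fixed contention length $M$, $T = E[N_R]/M = \sum_i P[\mathbf{1}_i=1]/M$. In the noisy run user $i$ is resolved iff it is resolved in the noiseless run \emph{and} the particular singleton event that first resolves it is not erased; by the independence of the erasure from the access/collision pattern, this has probability $P[\mathbf{1}_i=1](1-P_e)$. Summing over $i$ and dividing by $M$ gives $T' = T(1-P_e)$, and averaging over the (random) stopping time $M$ — which here is unaffected at leading order, or can be treated as conditioned upon — preserves the factorization, establishing \eqref{eq:noisy}.

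\textbf{Main obstacle.} The delicate point is the claim that erasing a singleton event is \emph{exactly} equivalent to deleting the slot, i.e. that a garbled singleton cannot "come back" and help later, and that one lost resolution does not cascade to block other resolutions in a way that breaks the clean multiplicative factor. The first part is immediate (a degree-one slot with unreadable content is inert), but the second requires care: in principle, failing to resolve user $i$ could prevent a cancellation that would have turned some other slot into a singleton. I would handle this by working at the level of \emph{which users are ultimately resolved} rather than the order of resolutions, and by noting that the erasure events are i.i.d. and independent of the graph; a cleaner route, if one wants a fully rigorous statement, is to observe that the intended identity is really about the \emph{expected number resolved in one shot} under the stated model, and to either (i) invoke the and-or tree / density-evolution picture where each "peeling" step succeeds independently with probability $1-P_e$, so the recursion for the unresolved fraction is simply reparametrized, or (ii) restrict the claim to the regime in which the paper operates, where $T$ already denotes an expectation over the same randomness. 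Either way, the crux is justifying that the single global factor $(1-P_e)$ is not corrupted by correlations introduced through the SIC cascade.
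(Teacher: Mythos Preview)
Your key observation---that an erased singleton is operationally identical to deleting that slot from the received set---is exactly the paper's starting point. Where you diverge is in the final accounting step: you attempt a user-level factorization, claiming that user $i$ is resolved in the noisy run iff it is resolved in the noiseless run \emph{and} its own resolving singleton survives. As you correctly flag in your ``Main obstacle'' paragraph, this is not true: an erasure of some \emph{other} user's singleton earlier in the peeling order can block $i$ even when $i$'s own singleton is clean, and conversely $i$ may still be resolved via a different slot even if its first singleton is erased. So the per-user resolution probability is not simply multiplied by $(1-P_e)$, and linearity of expectation over users does not close the argument.

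The paper sidesteps the cascade issue entirely by staying at the \emph{slot} level rather than the user level. It counts ``useful'' slots: a slot is useful iff it is non-idle and not noise-erased, which happens independently per slot with probability $(1-\Psi_0)(1-P_e)$. Hence among $M'$ observed slots the expected number of useful ones is $M_u=(1-\Psi_0)(1-P_e)M'$, which is exactly the expected number of useful slots in a \emph{noiseless} run over $M=M'(1-P_e)$ slots. Because the slots are i.i.d.\ in the access model (each user transmits in each slot independently with probability $p_a$), the two situations yield the same $\bar N_R$, and the identity follows from $T'=\bar N_R/M'=(\bar N_R/M)(M/M')=T(1-P_e)$. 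No per-user bookkeeping is needed, so the cascade problem never arises; your proposed fixes via density evolution or regime restriction are more laborious than necessary---the slot-thinning argument is the clean one-line route.
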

\begin{proof}
Denote by $M'$ the number of observed slots and by $M_u$ the expected number of slots that are useful in the SIC process; note that $M_u$ determines the expected number of resolved users $\bar{N}_R$.
In order to characterize the expected number of useful slots $M_u$ from $M'$, we first observe that zero-degree slots are by default useless and do not contribute to $M_u$.
The probability that a singleton slot will be useful in decoding (i.e., packet not erased) is obviously $1-P_e$.
Let's assume that $s_j$ is a collision slot and the corresponding received signal is $Y_j = \sum_{k \in A_j}^{}X_k + Z_j$, where $A_j$ is the set of users $u_k$ for which $a(k,j)=1$, see \eqref{eq:composite}.
Due to the assumed perfect SIC, the cancellation of already resolved transmissions from $Y_j$ does not depend on the slot degree $s_j$.
Thus, the degree of $s_j$ potentially can be reduced to one, when its usability again becomes $1 - P_e$.
This simple analysis reveals that $M_u = ( 1 - \Psi_0)( 1 - P_e) M'$, where $\Psi_0$ is the probability that slot degree is zero, see \eqref{eq:binom}. 
On the other hand, in the noiseless case, the same expected number of useful slots will be obtained for the expected number of observed slots equal to
\begin{align}
M = \frac{M_u}{1 - \Psi_0} = M' ( 1 - P_e )
\end{align}
Substituting $T = \frac{\bar{N}_R}{M} $ in \eqref{eq:noisy} proves the proposition.
\end{proof}

We note that the performed simulations indeed verified the correctness of the proposition, as we the average throughput in the noisy case scaled down as given in \eqref{eq:noisy}.
We also note that, when there is no baseband precoding \eqref{eq:precoding}, the previous proposition cannot be used.
In this case, the contribution of the user $u_i$ to the received composite signal is $g_i X_i$, where $g_i$ is a user-dependent coefficient.
The probability of erasure for a slot depends on which user is the last one left unresolved, as each user has, in general, a different packet erasure probability that is determined by the actual value of $g_i$.

\section{Imperfect Knowledge of $N$}
\label{sec:sensitivity}

In order to achieve the optimal performance in any ALOHA-based random access scheme, the scheme's parameters should be selected according to the number of contending users $N$.
For instance, in framed ALOHA and related schemes, the optimal number of slots $M$ that maximizes $T$ depends on $N$ \cite{R1975,CGH2007,L2011}.

In the proposed approach, both the slot-access probability $p_a$ and the evaluated fraction of the resolved users $F_R$ depend on $N$.
The value of $p_a$ is given by \eqref{eq:access} such that the desired slot degree distribution is obtained.
On the other hand, the contention period is terminated when $F_R = \frac{N_R} {N} \geq V$.
In other words, $N$ influences both the evolution of SIC through $p_a$ and its proper termination through $F_R$; therefore, the BS has to have its estimate before the contention period starts.\footnote{For the sake of completeness, we note that the instantaneous throughput $T_I$ does not depend on $N$, see \eqref{eq:T-I}. Also, we note that $S=1$ for the rest of the section.}
The problem of estimating $N$ within the frameless framework was addressed in \cite{STPP2013}.
We continue the paper by presenting the results on how accurate the estimate should be in order to achieve satisfactory performance.
Let $N_{\text{est}}$ denote the estimate of the actual number $N$ of contending users.
We assume that $N_{\text{est}} = ( 1 + \alpha) N$, where $\alpha$ is the relative estimation error.
From \eqref{eq:access} it follows:
\begin{align}
p_a = \frac{G}{N_{\text{est}}} = \frac{G_{\text{act}}}{N},
\end{align}
where $G_{\text{act}}$ is the actual target degree:
\begin{align}
G_{\text{act}} = \frac{G}{1 + \alpha}.
\end{align}
Following the same reasoning, it could be shown that:
\begin{align}
\label{eq:V}
V_{\text{act}} = \frac{V}{1 + \alpha},
\end{align}
where $V$ and $V_{\text{act}}$ are the target and the actual threshold on the fraction of resolved users.\footnote{
A general condition that has always to be satisfied is $V_{\text{act}} \leq 1$, including also the negative values of $\alpha$.}

\begin{figure}[tbp]
	\begin{center}
\includegraphics[width=0.93\columnwidth]{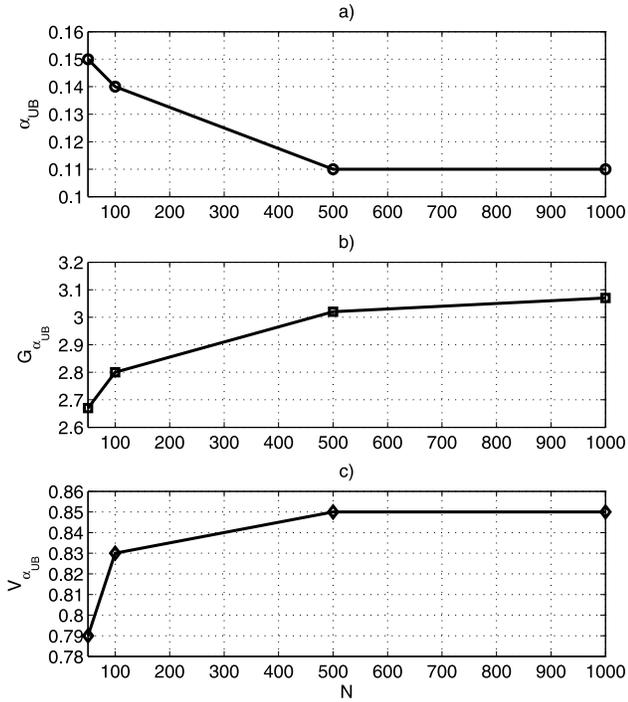}
	\end{center}
\caption{Allowing throughput loss of 5 \%, the figure depicts the corresponding: (a) $\alpha_{UB}$ - the upper bound on $\alpha_{\max}$, which defines the range of the acceptable relative estimation error, (b) target degree $G_{\alpha_{UB}}$ and (c) threshold on the fraction of resolved users $V_{\alpha_{UB}}$.}
	\label{fig:sensitivity}
\end{figure}

The optimization of the scheme's parameters $G$ and $V$ should be such that $E[T_{\max}(N) - T(N,G,V,\alpha)]$ is minimized; this requires both the analytical model of the estimator and $T=T(N,G,V,\alpha)$, which are out of the scope of the paper.
Henceforth, our only assumption is that the estimate is unbiased and that $\alpha \in [-\alpha_{\max}, \alpha_{\max}]$, and we investigate how large $\alpha_{\max}$ can be, given the allowed performance loss.

Fig.~\ref{fig:sensitivity} shows the upper bound $\alpha_{UB}$ on $\alpha_{\max}$ as function of the number of contending users $N$, when the allowed throughput loss is at most 5\% of its overall maximum value $\bar{T}_{GA}$, as listed in Table~\ref{tab:frameless}.
The figure also shows the corresponding target degree $G_{\alpha_{UB}}$ and threshold on fraction of resolved users $V_{\alpha_{UB}}$, for which this performance is achieved.
It can be observed that $\alpha_{UB} \geq 0.11$, i.e., the range of the acceptable estimation error is at least $[-0.11 N, 0.11 N]$.
When $G_{\alpha_{UB}}$ and $V_{\alpha_{UB}}$ are compared to optimal $G^*$ and $V^*$ (given in Table~\ref{tab:frameless}), it could be inferred that their behavior follows the same trends; also, $V_{\alpha_{UB}}$ is noticeably lower than $V^*$, which is due to \eqref{eq:V}. 
We also observed that the expected throughput obtained for $G_{\alpha_{UB}}$ and $V_{\alpha_{UB}}$ when there is no estimation error are negligibly lower (less than $10^{-2}$) than the ones obtained for the optimal $G^*$ and $V^*$, as demonstrated in Table~\ref{tab:subopt}.
In other words, choosing $G_{\alpha_{UB}}$ and $V_{\alpha_{UB}}$ instead $G^*$ and $V^*$ does not adversely affect performance.

\section{Discussion}
\label{sec:discussion}

\subsection{Practical Considerations}
\label{sec:practical}

The scheme discussed in this paper uses several idealized assumptions.
One of those, typical for coded random access schemes, is that each replica contains pointers to the locations of the other replicas transmitted by the same user. However, as the frame length is not fixed, it is neither trivial to make the pointers nor the cost of sending that many pointers is negligible.
A more elegant approach to solve the pointer problem could be the following.
Let us assume that the contention period starts by having the BS broadcasting a beacon that contains a random bit string, denoted by $B$.
The user with address $u_i$ uses a pseudorandom generator in order to determine in which slots $s_j$ to transmit.
Specifically, there is a function $f(u_i,B,s_j)$ that determines the values of $a(i,j)$, see \eqref{eq:composite}, after the beacon value $B$ has been sent.
The function is defined in a way that in a fraction of $p_a$ of the slots, $f$ gets a value of $1$.
Assuming that each replica of the user contains its address $u_i$, then after a replica is resolved, the BS can use the function $f$ and the knowledge of $B$ to find the slots in which the user has transmitted.

\renewcommand{\arraystretch}{1.3}
\begin{table}[t]
	\centering
		\begin{tabular}{|c||c|c|c|c|}
			\hline
			$N$ & 50 & 100 & 500 & 1000 \\
			\hline
			\hline
 			$G_{\alpha_{UB}}$ & 2.67 & 2.8 & 3.02 & 3.07 \\			
			\hline
			\hline
      $V_{\alpha_{UB}}$ & 0.79 & 0.83 & 0.85 & 0.85 \\
			\hline
			$\bar{T}_{\alpha_{UB}}$ & 0.82 & 0.84 & 0.87 & 0.88 \\
			\hline
		\end{tabular}
	\caption{Throughput performance for $G_{\alpha_{UB}}$ and $V_{\alpha_{UB}}$, when there is no estimation error.}
	\label{tab:subopt}
\end{table}

A practical problem that is specific to the frameless operation is the one of terminating the contention process.
In case of a FDD system or a system in which uplink and downlink transmissions are separated in time, like in cellular systems, beacons arrive in a separate channel to the users.  
On the other hand, if uplink and downlink share the same time and frequency, then the BS contends with the users for the link time when transmitting; we investigate the impact of such operation on the performance of the scheme in the further text.

Assume that after the $i$-th slot the BS decides to terminate the contention process by broadcasting a new beacon, piggybacking on it the acknowledgments to all the users that have been resolved since the previous beacon.
Further, assume that the beacon is transmitted with more power than user packets, effectively capturing the channel.
This way, only the users that decided to transmit in slot $i+1$ would collide with the beacon, miss the start of the new contention period and continue according to the old contention pattern, irrespective of whether or not they have been resolved.
In order to mitigate this problem, the beacon sent by the BS should occupy $L$ slots, such that with probability:
\begin{align}
\label{eq:}
P_{\text{miss}} = p_a^L = \frac{G^L}{N^L},
\end{align}
a user misses that a new beacon has been sent.
Taking into account that $G \approx 3$, see Tables~\ref{tab:frameless} and \ref{tab:subopt}, $P_{\text{miss}}$ can be made close to 0 for low values of $L$, even for a low number of contending users $N$.

On the other hand, the prolonged beacon could adversely impact the throughput.
Particularly, the instantaneous throughput could be now computed as:
\begin{align}
\label{eq:longer_beacon}
T_I = \frac{N_R}{M + (L - 1) },
\end{align}
where the term $L - 1$ accounts for the slots that could have been used for the contention instead.
If the contention ends while the current number of elapsed slots $M$ is rather low, which is typically the case when $T_I$ hits its threshold $S^*=1$,\footnote{For instance, if $G=3$ is assumed, then $M=1$ when contention ends in approximately 15\% of the cases, see \eqref{eq:binom}.} $L-1$ is comparable with $M$ and lowers the expected throughput substantially.
In this case, a better strategy is to reduce the termination criterion only to the condition $F_R \geq V$, abandoning the condition $T_I = 1$.
The reason for this is that the fraction of the resolved users $F_R$ reaches or exceeds its threshold $V$ when $M$ is comparable with $N$, decreasing the impact of $L-1$ in \eqref{eq:longer_beacon}.

For example, assume that beacon length is $L=3$ slots.
When $L=3$ and $N=50$, it could be shown that $P_{\text{miss}} < 2 \cdot 10^{-4}$; for larger $N$, $P_{\text{miss}}$ decreases exponentially.
Table~\ref{tab:beacon} shows the maximized expected throughput $\bar{T}$ for $L=3$ when only the criterion $F_R \geq V$ is used to stop the contention, as well as the corresponding values for $G$ and $V$ that maximize it.
Clearly, there is a throughput loss due to the prolonged beacon and usage of the reduced termination criterion.
In the worst case, when $N=50$, the throughput loss is about 7\% when compared to the best possible performance (i.e., $\bar{T}_{GA}$) presented in Table~\ref{tab:frameless}; as $N$ increases the throughput loss decreases.
Nonetheless, the throughput results presented in Table~\ref{tab:beacon} are still better than the ones provided in the previous literature for the given range of $N$.

\renewcommand{\arraystretch}{1.3}
\begin{table}[t]
	\centering
		\begin{tabular}{|c||c|c|c|c|}
			\hline
			$N$ & 50 & 100 & 500 & 1000 \\
			\hline
			\hline
			$\bar{T}$ & 0.76 & 0.8 & 0.85 & 0.86 \\
			\hline			
 			$G$ & 2.85 & 2.89 & 3.02 &  3.08 \\			
			\hline
			\hline
      $V$ & 0.87 & 0.85 & 0.89 & 0.9 \\
			\hline
		\end{tabular}
	\caption{Maximized throughput performance $\bar{T}$ when beacon length is $L=3$ and only the condition $F_R \geq V$ is used to terminate the contention, and corresponding $G$ and threshold $V$.}
	\label{tab:beacon}
\end{table}

Finally, we note that if the beacon is not decoded correctly, the user can stay quiet until the next beacon; other concerns related the design of the beacon require a detailed analysis that is out of the scope of the paper.

\subsection{Further Insights in the Relations to Rateless Codes}

As outlined in the text, the inspiration for the proposed scheme was drawn from rateless coding framework.
However, in contrast to rateless coding scenario, the output (slot) degree distribution cannot be controlled directly, and the slot degrees are Poisson-distributed with mean value equal to the target degree $G$, see \eqref{eq:binom}.
In other words, slot degree distributions that look like robust soliton distribution cannot be achieved.
LT codes avoid the coupon collector problem, expressed in \eqref{eq:zero}, by scaling the mean $G$ of the output distribution as $O(\log N)$, where $N$ is the number of input symbols i.e. users.
The results presented in Table~\ref{tab:frameless} show that the optimal mean $G^*$ of the output degree distribution modestly increases with $N$ and that $G$ could be approximated by a constant with no substantial performance loss.
This is similar to the raptor coding framework \cite{S2006}, where the mean degree of the output distribution is constant.
Raptor codes fight the coupon collector problem by exploiting a high-degree pre-code, which is not applicable in our case due to: (1) a high targeted value of $G$ would
mean that the variance of the actual slot degree would also increase proportionally, as the mean and variance for the Poisson distribution are
equal, and the precoding would thus not be effective, and (2) the slots with high degrees would adversely affect the performance of the SIC algorithm in practice \cite{FJPSL2011,ZZ2012}.
In the proposed scheme, we avoid the coupon collector problem by terminating the contention period such that the throughput is maximized, disregarding the fact that there are users that remain unresolved.
These users continue to contend in the subsequent rounds, in the same way as in any ALOHA-based scheme.

\section{Conclusions}
\label{sec:conclusion}

This paper has introduced the random access scheme termed frameless ALOHA, in which the contention period is adaptively terminated in order to maximize the throughput. Although based on rather simple principles, the proposed scheme provides considerably high throughputs, in fact the highest reported so far for low to moderate number of users.
In other words, a strategy in which the length of the contention period is adaptively tuned to the evolution of the SIC provides a superior throughput performance in comparison to the strategies in which the frame length is fixed.
Rateless-like behavior provides for yet another advantage, as the slots garbled by noise can be simply discarded; i.e., the proposed approach has an inherent potential to adapt not just to the evolution of SIC, but also to the wireless link conditions.

Finally, as the further work we are investigating the ways to boost the intermediate performance of the scheme; an approach where the users are divided into classes with different slot access probabilities seems as a promising direction in this respect.
This approach also achieves different probability of user resolution, which can be of interest in practical applications where certain classes of users report more important data.
Another extension is related to inclusion of the capture effect into analysis; preliminary research shows that in this case higher target degrees are favored and higher throughputs can be achieved. 



\begin{thebibliography}{10}
\providecommand{\url}[1]{#1}
\csname url@samestyle\endcsname
\providecommand{\newblock}{\relax}
\providecommand{\bibinfo}[2]{#2}
\providecommand{\BIBentrySTDinterwordspacing}{\spaceskip=0pt\relax}
\providecommand{\BIBentryALTinterwordstretchfactor}{4}
\providecommand{\BIBentryALTinterwordspacing}{\spaceskip=\fontdimen2\font plus
\BIBentryALTinterwordstretchfactor\fontdimen3\font minus
  \fontdimen4\font\relax}
\providecommand{\BIBforeignlanguage}[2]{{%
\expandafter\ifx\csname l@#1\endcsname\relax
\typeout{** WARNING: IEEEtran.bst: No hyphenation pattern has been}%
\typeout{** loaded for the language `#1'. Using the pattern for}%
\typeout{** the default language instead.}%
\else
\language=\csname l@#1\endcsname
\fi
#2}}
\providecommand{\BIBdecl}{\relax}
\BIBdecl

\bibitem{R1975}
L.~G. Roberts, ``Aloha packet system with and without slots and capture,''
  \emph{SIGCOMM Comput. Commun. Rev.}, vol.~5, no.~2, pp. 28--42, Apr. 1975.

\bibitem{OIN1977}
H.~Okada, Y.~Igarashi, and Y.~Nakanishi, ``Analysis and application of framed
  {ALOHA} channel in satellite packet switching networks - {FADRA} method,''
  \emph{Electronics and Communications in Japan}, vol.~60, pp. 60--72, Aug.
  1977.

\bibitem{CGH2007}
E.~Cassini, R.~D. Gaudenzi, and O.~del Rio~Herrero, ``{C}ontention {R}esolution
  {D}iversity {S}lotted {ALOHA} {(CRDSA)}: {A}n {E}nhanced {R}andom {A}ccess
  {S}cheme for {S}atellite {A}ccess {P}acket {N}etworks,'' \emph{IEEE Trans.
  Wireless Commun.}, vol.~6, no.~4, pp. 1408--1419, Apr. 2007.

\bibitem{CR1983}
G.~L. Choudhury and S.~S. Rappaport, ``{D}iversity {ALOHA} - {A} {R}andom
  {A}ccess {S}cheme for {S}atellite {C}ommunications,'' \emph{IEEE Trans.
  Commun.}, vol.~31, no.~3, pp. 450--457, Mar. 1983.

\bibitem{L2011}
G.~Liva, ``{G}raph-{B}ased {A}nalysis and {O}ptimization of {C}ontention
  {R}esolution {D}iversity {S}lotted {ALOHA},'' \emph{IEEE Trans. Commun.},
  vol.~59, no.~2, pp. 477--487, Feb. 2011.

\bibitem{NP2012}
K.~R. Narayanan and H.~D. Pfister, ``Iterative {Co}llision {R}esolution for
  {S}lotted {ALOHA}: {A}n {O}ptimal {U}ncoordinated {T}ransmission {P}olicy,''
  in \emph{Proc of 7th ISTC}, Gothenburg, Sweden, Aug. 2012.

\bibitem{BLMR1998}
J.~Byers, M.~Luby, M.~Mitzenmacher, and A.~Rege, ``{A} {D}igital {F}ountain
  {A}pproach to {R}eliable {D}istribution of {B}ulk {D}ata,'' in \emph{Proc. of
  ACM SIGCOMM 1998}, Vancouver, BC, Canada, Sep. 1998.

\bibitem{SPV2012}
C.~Stefanovic, P.~Popovski, and D.~Vukobratovic, ``{F}rameless {ALOHA} protocol
  for {W}ireless {Networks},'' \emph{IEEE Comm. Letters}, vol.~16, no.~12, pp.
  2087--2090, Dec. 2012.

\bibitem{STPP2013}
C.~Stefanovic, K.~F. Trilingsgaard, N.~K. Pratas, and P.~Popovski, ``{J}oint
  {E}stimation and {C}ontention-{R}esolution {P}rotocol for {W}ireless {R}andom
  {A}ccess,'' in \emph{Proc. of IEEE ICC 2013}, Budapest, Hungary, Jun. 2013.

\bibitem{PLC2011}
E.~Paolini, G.~Liva, and M.~Chiani, ``{H}igh {T}hroughput {R}andom {A}ccess via
  {C}odes on {G}raphs: {C}oded {S}lotted {ALOHA},'' in \emph{Proc. of IEEE ICC
  2011}, Kyoto, Japan, Jun. 2011.

\bibitem{PFC2010}
E.~Paolini, M.~P.~C. Fossorier, and M.~Chiani, ``{G}eneralized and {D}oubly
  generalized {LDPC} {C}odes {W}ith {R}andom {C}omponent {C}odes for the
  {B}inary {E}rasure {C}hannel,'' \emph{IEEE Trans. Inf. Theory}, vol.~56,
  no.~4, pp. 1651--1672, Apr. 2010.

\bibitem{PLC2011b}
E.~Paolini, G.~Liva, and M.~Chiani, ``{G}raph-{B}ased {R}andom {A}ccess for the
  {C}ollision {C}hannel without {F}eed-{B}ack: {C}apacity {B}ound,'' in
  \emph{Proc. of IEEE Globecom}, Houston, TX, USA, Dec. 2011.

\bibitem{LPLC2012}
G.~Liva, E.~Paolini, M.~Lentmaier, and M.~Chiani, ``Spatially-coupled random
  access on graphs,'' in \emph{Proc. of IEEE ISIT 2012}, Boston, MA, USA, Jul.
  2012.

\bibitem{L2002}
M.~Luby, ``{LT} {C}odes,'' in \emph{Proc. of 43rd IEEE FOCS}, Vancouver, BC,
  Canada, Nov. 2002.

\bibitem{FJPSL2011}
K.~Fyhn, R.~M. Jacobsen, P.~Popovski, A.~Scaglione, and T.~Larsen,
  ``{M}ultipacket {R}eception of {P}assive {UHF} {RFID} {T}ags: {A}
  {C}ommunication {T}heoretic {A}pproach,'' \emph{IEEE Trans. Sig. Proc},
  vol.~59, no.~9, pp. 4225 --4237, Sep. 2011.

\bibitem{ZZ2012}
A.~Zanella and M.~Zorzi, ``{T}heoretical {A}nalysis of the {C}apture
  {P}robability in {W}ireless {S}ystems with {M}ultiple {P}acket {R}eception
  {C}apabilities,'' \emph{IEEE Trans. Commun.}, vol.~60, no.~4, pp. 1058--1071,
  Apr. 2012.

\bibitem{PLC2012}
E.~Paolini, G.~Liva, and M.~Chiani, ``{R}andom {A}ccess on {G}raphs: {A}
  {S}urvey and {N}ew {R}esults,'' in \emph{Proc. of IEEE Globecom}, Houston,
  TX, USA, Dec. 2011.

\bibitem{RU2007}
T.~Richardson and R.~Urbanke, \emph{{M}odern {C}oding {T}heory}.\hskip 1em plus
  0.5em minus 0.4em\relax Cambridge University Press, Cambridge, UK, 2007.

\bibitem{LMS1998}
M.~G. Luby, M.~Mitzenmacher, and A.~Shokrollahi, ``{A}nalysis of {R}andom
  {P}rocesses via {A}nd-{O}r {T}ree {E}valuation,'' in \emph{Proc. of 9th
  ACM-SIAM SODA}, San Francisco, CA, USA, Jan. 1998.

\bibitem{M2005}
D.~J.~C. MacKay, ``{F}ountain {C}odes,'' \emph{IEE Proc. Commun.}, vol. 152,
  no.~6, pp. 1062--1068, 2005.

\bibitem{S2006}
A.~Shokrollahi, ``{R}aptor {C}odes,'' \emph{IEEE Trans. Info. Theory}, vol.~52,
  no.~6, pp. 2551--2567, Jun. 2006.

\end{thebibliography}
\end{document}